\documentclass{article}
\usepackage{arxiv}

\final

\usepackage{tikz}

\newtheorem{theorem}{Theorem}
\newtheorem{lemma}[theorem]{Lemma}
 
\theoremstyle{definition}

\definecolor{ForestGreen}{rgb}{.13,.54,.13}

\newcommand{\cakeprt}{\mathbf{x}}
\newcommand{\choreprt}{\mathbf{\widehat{x}}}
\newcommand{\simplex}{\Delta^{m-1}}
\newcommand{\smallsimplex}{\delta^{m-1}}

\begin{document}

\title{Cutting a Cake Fairly for Groups Revisited}
\markright{Cutting a Cake Fairly}
\author{Erel Segal-Halevi and Warut Suksompong}

\maketitle

\begin{abstract}
Cake cutting is a classic fair division problem, with the cake serving as a metaphor for a heterogeneous divisible resource.
Recently, it was shown that for any number of players with arbitrary preferences over a cake, it is possible to partition the players into groups of any desired size and divide the cake among the groups so that each group receives a single contiguous piece and every player is envy-free.
For two groups, we characterize the group sizes for which such an assignment can be computed by a finite algorithm, showing that the task is possible exactly when one of the groups is a singleton.
We also establish an analogous existence result for chore division, and show that the result does not hold for a mixed cake.
\end{abstract}

\section{Introduction.}

After a strong annual performance, the owner of a company decides to reward her 20 employees by renting them a holiday cottage for the last two weeks of the year.
Due to capacity constraints, she needs to split the employees into two groups of ten, divide the two-week period into two contiguous intervals, and assign one interval to each group.
The employees have different preferences on the time they would like to spend at the cottage.
Can the owner make sure that the resulting assignment is \emph{envy-free}, that is, no employee believes that the time slot given to the other group is better than the one given to his or her own group?

In a recent \textsc{Monthly} note, Segal-Halevi and Suksompong~\cite{SegalhaleviSu21} showed that for any preferences that the employees may have, there always exists an assignment fulfilling the owner's wishes.
In fact, the result holds even if there are $n$ employees and the owner would like them to vacation at the cottage in groups of size $k_1,k_2,\dots,k_m$ in this order, for any $n$, $m$, and any $k_j$'s that sum to $n$.
This result generalizes an influential theorem by Stromquist~\cite{Stromquist80}, Su~\cite{Su99}, and Woodall~\cite{Woodall80} (the first two also in this \textsc{Monthly}), which concerns the case where $k_1=\dots=k_m=1$, i.e., the employees go to the cottage one at a time.
Yet, two important questions remain:
\begin{itemize}
\item \emph{How} can the owner compute a desired assignment? Is it even possible to do so?
\item What if instead of dividing a \emph{good} like holiday accommodation, we were to divide a \emph{chore} such as housework, which yields a disutility to everyone involved? Or a mixture of goods and chores?
\end{itemize}
In this article, we provide comprehensive answers to both of these questions.

\section{Model.}
We switch to the language of \emph{cake cutting}---the classical framework for studying how to divide a resource fairly \cite{BramsTa96,RobertsonWe98}.
In cake cutting, we have a set of $n$ players along with a cake in the form of an interval $[0,1]$.
A \emph{partition} of the cake 
is specified by nonnegative real numbers $x_1,\dots,x_m$ such that $\sum_{j=1}^m x_j = 1$, where $x_j$ denotes the length of the $j$th piece.
For each partition $(x_1,\dots,x_m)$, a player \emph{prefers} one or more pieces.\footnote{
Formally, each player's preference is represented by a function from the standard  simplex $\simplex$ to a nonempty subset of $\{1,2,\dots,m\}$.
Note that this model allows a player to prefer, for example, the piece of length $0.8$ in the partition $(0,0.2,0.8)$
and the piece of length $0.2$ in the partition $(0.2,0,0.8)$, even though the partitions are physically identical. We could add a requirement that each player always prefers the same physical piece(s) in such equivalent partitions; this would not affect our results except in Section \ref{sec:mixed}.
}
The result by Segal-Halevi and Suksompong~\cite{SegalhaleviSu21} is based on reducing the problem to the case of singleton groups and applying the theorem of Stromquist~\cite{Stromquist80} and Su~\cite{Su99}.
Both of these results work under the following two mild assumptions:
\begin{enumerate}
\item \emph{Hungry players.} Players never prefer an empty piece.
\item \emph{Closed preference sets.} Any piece that is preferred for a convergent sequence of partitions is also preferred at the limiting partition.
\end{enumerate}

An \emph{envy-free assignment}
is 
a partition of the cake into $m$ contiguous pieces, together with a division of the players into $m$ groups with group $j$ containing $k_j$ players, such that each player in group $j$ prefers the $j$th piece from the left in the partition.\footnote{Fair division among \emph{groups} of agents has received increasing attention, not only in cake cutting \cite{SegalhaleviNi19,SegalhaleviSu21} but also in the allocation of indivisible goods \cite{KyropoulouSuVo20,ManurangsiSu17,ManurangsiSu22,SegalhaleviSu19,Suksompong18,Suksompong18-dissertation} and rent division \cite{GhodsiLaMo18}.}

\section{Protocols.}

We focus first on the case of two groups.
For this case, a protocol for computing an envy-free assignment is given in \cite{SegalhaleviSu21}:
each player $i$ marks a point $x_i\in[0,1]$ such that the player prefers both $[0,x_i]$ and $[x_i,1]$. The protocol then cuts the cake at a point~$y$ between the $k_1$th and the $(k_1+1)$st marks from the left, and assigns the players with the $k_1$ leftmost marks to $[0,y]$ and the remaining players to $[y,1]$.
However, this protocol relies crucially on an additional assumption that the players' preferences are \emph{monotone}, meaning that if a player prefers a piece $[0,y]$, he or she must also prefer the piece $[0,z]$ for any $z > y$, and analogously for $[y,1]$.
While monotonicity is a reasonable assumption in certain situations, it is relatively strong in comparison to the two assumptions needed for the existence result.
Indeed, in our introductory example, some employees may not want to spend too long at the cottage, and forgoing part of the allotted time can be frowned upon by the owner or the other group.
What, then, can we do when the preferences are possibly non-monotone?

For brevity, let us say that a player \emph{prefers the left piece} (resp., \emph{right piece}) \emph{at point $x\in[0,1]$} if the player prefers the interval $[0,x]$ (resp., $[x,1]$).
The hungry players assumption implies that every player prefers only the right piece at $0$ and prefers only the left piece at $1$.
A first idea that comes to mind is to ask the players for their entire preferences so that we can determine an appropriate cut point $x$, which we know must exist.
However, this may lead to an infinite protocol.
Specifically, given any infinite sequence of numbers $0 < a_1 < a_2 < \dots < 1$, it is possible that a player prefers the right piece at any point in $[0,a_1]$, the left piece at any point in $[a_1,a_2]$, the right piece at any point in $[a_2,a_3]$, and so on---such a preference is consistent with the hungry players and closed preference sets assumptions, but cannot be described in finite time.
A protocol that attempts to learn this preference in its entirety will never terminate.

To see what can be done with finite protocols, let us recall the simplest cake-cutting protocol: cut-and-choose. This protocol begins by asking one of the players a \emph{query}, which can be phrased as ``mark a point $x$ such that you are indifferent between the cake to the left of $x$ and the cake to the right of $x$.''
We slightly modify this query to handle the case in which there are several such ``indifference points:''
\begin{quote}
\normalsize
\emph{First-indifference-point.} Given a player $i$, return the smallest point $y \in [0,1]$ such that $i$ prefers both pieces at $y$.
\end{quote}
\begin{lemma}
\label{lem:first-indifference-point}
For every hungry player $i$ with closed preference sets, 
$i$'s first indifference point exists, 
it equals the smallest point at which $i$ prefers the left piece, and it is strictly between $0$ and $1$.
\end{lemma}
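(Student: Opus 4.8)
The plan is to translate the statement into two closed subsets of the unit interval and then argue topologically. Write $L$ for the set of points $x \in [0,1]$ at which $i$ prefers the left piece $[0,x]$, and $R$ for the set of points at which $i$ prefers the right piece $[x,1]$. In this language an indifference point is exactly an element of $L \cap R$, the first indifference point is $\min(L \cap R)$ (once we know it exists), and the ``smallest point at which $i$ prefers the left piece'' is $\min L$. Before proving anything I would record three facts that follow immediately from the hypotheses: since $i$ always prefers at least one piece, $L \cup R = [0,1]$; since preference sets are closed, both $L$ and $R$ are closed subsets of $[0,1]$; and since $i$ is hungry, $0 \in R \setminus L$ and $1 \in L \setminus R$.

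First I would check that $\min L$ exists and is strictly positive. The set $L$ is nonempty, as it contains $1$, and, being closed and bounded, it attains its infimum; thus $\ell := \min L$ is well defined. Since $\ell \in L$ while $0 \notin L$, we get $\ell > 0$.

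The crux is to show that $\ell$ is itself an indifference point, that is, $\ell \in R$. By minimality of $\ell$, every $x \in [0, \ell)$ lies outside $L$ and hence in $R$, because $L \cup R = [0,1]$. Choosing any sequence $x_n \in [0,\ell)$ with $x_n \to \ell$ (possible since $\ell > 0$) and invoking the closed-preference-sets assumption for the right piece yields $\ell \in R$, so $\ell \in L \cap R$. Conversely, every indifference point belongs to $L$ and is therefore at least $\ell$; since $\ell$ is an indifference point, it is the smallest one. This simultaneously shows that the first indifference point exists and that it equals $\min L$, the smallest point at which $i$ prefers the left piece.

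Finally I would pin down the location of $\ell$. We already have $\ell > 0$. For the upper bound, note that $\ell \in R$ while hungriness gives $1 \notin R$; hence $\ell \neq 1$, so $\ell < 1$, completing the claim that $0 < \ell < 1$. The step I expect to carry the real weight is showing $\ell \in R$: it is precisely here that closedness of preference sets is indispensable, since without it the infimum of the left-preference points could be a point at which only the right piece is preferred, leaving $L \cap R$ empty and the query undefined.
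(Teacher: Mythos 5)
Your proof is correct and follows essentially the same route as the paper's: both take $\ell=\min L$ (well-defined by closedness and $1\in L$), use closedness of the right-preference set on points approaching $\ell$ from below to conclude $\ell\in R$, and invoke hungriness at the endpoints $0$ and $1$ to place $\ell$ strictly inside $(0,1)$. Your version merely makes the set-theoretic bookkeeping ($L\cup R=[0,1]$, $0\in R\setminus L$, $1\in L\setminus R$) explicit where the paper leaves it implicit.
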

\begin{proof}
Let $y$ be the smallest point such that $i$ prefers the left piece at $y$. Observe that (i) $y$ exists because $i$ prefers the left piece at~$1$, (ii) $y > 0$ because $i$ prefers only the right piece at $0$, and (iii) $y$ is well-defined due to the closed preference sets assumption.
Then, since $i$ prefers the right piece at any point $z < y$, the closedness assumption tells us that $i$ also prefers the right piece at $y$.
Hence $y$ is the first indifference point. The hungry players assumption implies that this point $y$ is strictly between $0$ and $1$.
\end{proof}

When the preferences are monotone, the set of indifference points
is a (possibly degenerate) closed interval, and the answer to a First-indifference-point query is the leftmost point of that interval.
However, when the preferences are not monotone, the set of indifference points can be any closed subset of $(0,1)$.
Indeed, given such a subset~$C$, suppose that its rightmost point is $r < 1$, which is well-defined since $C$ is closed.
A hungry player may prefer the left piece at any point in $C\cup[r,1]$ and the right piece at any point in $[0,r]$.
The set of indifference points is then precisely $C$.

We show that the First-indifference-point query is sufficient for handling the special case in which the first group is a singleton.
For any positive integer $t$, denote by $[t]$ the set $\{1,2,\dots,t\}$.

\begin{theorem}
\label{thm:cake-pos}
Let $n$ be any positive integer and $m=2$.
For $n$ hungry players with closed preference sets, there is a finite protocol using the First-indifference-point query that computes an envy-free assignment with the first group containing one player and the second group containing $n-1$ players.
\end{theorem}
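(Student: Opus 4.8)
The plan is to reduce the whole problem to a single round of $n$ queries followed by one cut, with no appeal to any existence or fixed-point theorem. First I would ask a First-indifference-point query to each player $i$, obtaining values $y_1,\dots,y_n$; by Lemma~\ref{lem:first-indifference-point} each $y_i$ exists and lies strictly in $(0,1)$. I would then set $y^\ast = \min_{i} y_i$, let $i^\ast$ be a player attaining this minimum (breaking ties arbitrarily), cut the cake at $y^\ast$, assign $i^\ast$ alone to the left piece $[0,y^\ast]$, and assign the remaining $n-1$ players to the right piece $[y^\ast,1]$. Because this uses exactly $n$ queries, the protocol is finite, and because $y^\ast\in(0,1)$ both pieces are nonempty, as the hungry-players assumption requires.

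The remaining task is to verify that this assignment is envy-free, and for this I would invoke the two facts packaged in Lemma~\ref{lem:first-indifference-point}: at its first indifference point a player prefers both pieces, and at every point strictly to the left of that point the player prefers only the right piece. For the singleton $i^\ast$, since $y^\ast=y_{i^\ast}$, player $i^\ast$ prefers both pieces at $y^\ast$, in particular the left piece, so $i^\ast$ does not envy. For any other player $j$, the choice of $y^\ast$ as the minimum gives $y_j\ge y^\ast$; I would split into the case $y_j>y^\ast$, where $y^\ast$ lies strictly left of $j$'s first indifference point so that $j$ prefers only the right piece there, and the tie case $y_j=y^\ast$, where $y^\ast$ is $j$'s own first indifference point so that $j$ prefers both pieces, hence again the right piece. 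In either case every player in the second group prefers the piece assigned to their group.

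The one conceptual move that makes everything fall into place is selecting the player with the \emph{smallest} first indifference point as the singleton. I expect the only point requiring care is checking that the two constraints point in compatible directions: the singleton needs the cut to be at least its own indifference point (to prefer the left piece), while every other player needs the cut to be at most its own indifference point (to still prefer the right piece), and these are simultaneously satisfiable exactly at the minimum $y^\ast$. The tie subcase $y_j=y^\ast$ is the only genuine subtlety, and it is resolved cleanly by the ``prefers both pieces at the first indifference point'' half of Lemma~\ref{lem:first-indifference-point}; beyond this I anticipate no real difficulty, since the queries already furnish the cut point explicitly rather than merely guaranteeing its existence.
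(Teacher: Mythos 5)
Your proposal is correct and is essentially identical to the paper's own proof: query every player's first indifference point, cut at the minimum, give the minimizing player the left piece, and use Lemma~\ref{lem:first-indifference-point} to conclude that everyone else (weakly) prefers the right piece at any point up to and including their own first indifference point. Your explicit split into the cases $y_j > y^\ast$ and $y_j = y^\ast$ is just an unrolled version of the paper's single observation that each player prefers the right piece at every point $z \le x_i$.
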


\begin{proof}
Our protocol makes a First-indifference-point query for every player---assume that the answers are $x_1,\dots,x_n$.

Let $i'\in[n]$ be such that $x_{i'} = \min\{x_1,\dots,x_n\}$.
The protocol cuts the cake at point $x_{i'}$, and assigns player~$i'$ to the left piece and the remaining players to the right piece.
Player~$i'$ prefers the left piece by definition of $x_{i'}$.
For any other player~$i$, 
Lemma~\ref{lem:first-indifference-point} implies that 
$x_i$ is the smallest point at which $i$ prefers the left piece, so
$i$ prefers the right piece at any point $z_{i} \le x_{i}$, and therefore also prefers the right piece at $x_{i'}\le x_{i}$.
\end{proof}

At this point, one might be tempted to think that the simple protocol in Theorem~\ref{thm:cake-pos} can be generalized to two groups with any player distribution.
However, and perhaps surprisingly, we show that \emph{no} finite protocol can compute an envy-free assignment when both groups consist of at least two players.
This impossibility holds even if we allow the protocol to make more general types of queries:\footnote{Our queries are intended to mirror those in the ``Robertson--Webb model''---see the discussion at the end of this section.}
\begin{quote}
\normalsize
\begin{itemize}
\item \emph{Next-indifference-point.} Given a player and a point $x\in[0,1]$, return the smallest point $y\ge x$ such that the player prefers both pieces at $y$ (or state that no such $y$ exists).
Note that this is a generalization of the First-indifference-point query, which corresponds to the special case where $x=0$.
\item \emph{Previous-indifference-point.} Given a player and a point $x\in[0,1]$, return the largest point $y\le x$ such that the player prefers both pieces at $y$ (or state that no such $y$ exists).
\item \emph{Evaluate.} Given a player and a point $x\in[0,1]$, return whether the player prefers the left piece, the right piece, or both pieces at $x$.
\end{itemize}
\end{quote}

While it may seem restrictive to consider only indifference points, we remark here that the three types of queries are together quite powerful.
For example, 
suppose we are given a point $x$, and we want to determine the smallest point $y\ge x$ such that the player prefers the left piece at $y$.
We can first make an Evaluate query at $x$.
If the left piece is preferred, the answer is $x$.
Else, we make a Next-indifference-point query at~$x$.
By an argument similar to that in Lemma \ref{lem:first-indifference-point}, the answer $y$ to this query is also our desired answer.
Analogously, we can determine the smallest point $y\ge x$ such that the player prefers the \emph{right} piece at $y$, or the largest point $y\le x$ such that the player prefers the left (or right) piece at $y$.

\begin{theorem}
\label{thm:cake-neg}
Let $n$ be any positive integer, $m=2$, and $k_1,k_2$ be positive integers such that $k_1+k_2 = n$ and $\min\{k_1,k_2\} \ge 2$.
There does not exist a finite protocol using the Next-indifference-point, Previous-indifference-point, and Evaluate queries that, for any $n$ hungry players with closed preference sets, computes an envy-free assignment with group $j$ containing $k_j$ players.
\end{theorem}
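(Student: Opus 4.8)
The plan is to prove this impossibility via an adversary argument in the query model. I would describe a strategy for answering the protocol's queries so that, no matter how the (necessarily finite) protocol behaves, upon termination I can exhibit a single profile of hungry players with closed preference sets that is consistent with every answer given, yet for which the protocol's output is not envy-free. Since the existence result of Segal-Halevi and Suksompong~\cite{SegalhaleviSu21} guarantees that a correct assignment \emph{does} exist for this profile, the protocol must be wrong on it, contradicting its claimed correctness.

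The first step is to reframe what the protocol must accomplish. For a cut at a point $y$, a valid assignment exists exactly when the players' preferences \emph{at} $y$ can be split into a group of $k_1$ preferring the left piece and a group of $k_2$ preferring the right piece; writing $a$ for the number of players who prefer only the left piece at $y$ and $b$ for the number who are indifferent (prefer both), this happens precisely when $a \le k_1 \le a+b$. Call such a $y$ \emph{feasible}. The key observation is that feasibility at a given $y$ depends only on the preferences at that single point, so the protocol can \emph{verify} any candidate cut with $n$ Evaluate queries and then read off the groups; the genuine difficulty is therefore to \emph{locate} a feasible point among uncountably many candidates using only finitely many queries.

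Next I would isolate why $\min\{k_1,k_2\}\ge 2$ is the crucial hypothesis. Declaring a player indifferent only relaxes the condition $a\le k_1\le a+b$, so feasibility can never be forced to require any particular player to strictly prefer a side. When $k_1=1$ this is exactly why Theorem~\ref{thm:cake-pos} succeeds: the first indifference point of the player with the smallest such point is already feasible (there $a=0$, $b=1$). When $k_1,k_2\ge 2$, filling both groups can require two players to be \emph{simultaneously} willing to sit on the appropriate side, and by choosing the fixed preferences of the other $n-2$ players so that the counts $a,b$ are suitably imbalanced, I can arrange that no single indifference point satisfies $a\le k_1\le a+b$, so that feasibility hinges on a \emph{coincidence} of two players' switching behavior. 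My construction will use two distinguished ``oscillating'' players whose preference alternates between the left and right pieces across an infinite increasing sequence of points; as observed earlier in the article, such preferences are consistent with the hungry and closed-preference-set assumptions although they cannot be described in finite time. The remaining $n-2$ players serve only to pin the counts so that, throughout the relevant region, a cut is feasible \emph{only} where both oscillators switch together.

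The adversary then hides this coincidence: I would answer every Next-indifference-point, Previous-indifference-point, and Evaluate query consistently with a ``default'' interleaving in which the two oscillators' switches never coincide, so that each probed location looks infeasible, and because the protocol makes only finitely many queries it can inspect only finitely many of the infinitely many candidate switch points. At the end I commit to a consistent hungry, closed profile whose unique feasible region lies in an un-inspected gap, so whatever cut the protocol outputs is infeasible for that profile. The hard part, and the step I expect to demand the most care, is the \emph{global} consistency of this scheme: as $y$ runs from $0$ to $1$ the number of players preferring the left piece rises from $0$ to $n$ and hence must cross the threshold $k_1$, and wherever it \emph{equals} $k_1$ on an interval I would obtain an unwanted, easily found feasible interval. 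I must therefore route every crossing through simultaneous double-switches that jump over the value $k_1$, confining feasibility to the hidden coincidences, all while keeping the background players' own transitions infeasible and every query answer consistent with the final profile.
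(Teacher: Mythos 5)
Your high-level strategy (an adversary that answers finitely many queries, then commits to a consistent hungry, closed profile on which the output cut fails) is the same as the paper's, and your feasibility criterion---a cut $y$ works iff $a \le k_1 \le a+b$, where $a$ counts players preferring only the left piece and $b$ counts the indifferent ones---is correct and nicely isolates why $\min\{k_1,k_2\}\ge 2$ matters. But the concrete construction you sketch has a genuine gap, and in fact fails as specified. First, you fix the preferences of the $n-2$ background players in advance. The Next-indifference-point query then reveals their transition points \emph{exactly}: the protocol queries Next-indifference-point$(i,0)$ for each background player $i$, learns the precise points $t$ where the background left-count changes, and can then corner your two oscillators. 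Work through your own counting in the region between two consecutive background transitions where the background count sits at $k_1-1$: there, \emph{any} indifference point of an oscillator---single or coincident---satisfies $a \le k_1 \le a+b$ and is feasible, yet the oscillators must transition somewhere in that region (they are right-only at its left end and must be left-only at its right end to keep the endpoints infeasible). A single Next-indifference-point query on an oscillator, issued just past the revealed background transition, forces your adversary to name a concrete feasible point, which the protocol verifies with $n$ Evaluate queries and outputs. Second, your plan to ``route every crossing through simultaneous double-switches that jump over the value $k_1$'' cannot work even in principle: if $a=k_1-1$ just left of a double switch, then at the switch point $a=k_1-1$ and $b\ge 2$, so $a\le k_1\le a+b$---the crossing point \emph{is} feasible. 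This is no accident; the existence theorem of \cite{SegalhaleviSu21} guarantees a feasible point for every profile, so feasibility can never be avoided, only hidden.

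The fix, which is the paper's key idea, is to commit to \emph{no} player's transition point, ever. The paper's adversary maintains closed ``known intervals'' (initially just $\{0\}$ and $\{1\}$); on each query it enlarges or adds a known interval so that the gap to adjacent known intervals is only halved---hence unknown territory always remains---and it declares that on the newly revealed region all players \emph{other than the queried one} prefer only the right piece (left, near the right end of the cake), while the queried player gets a fresh pair of indifference points inside the new region so that every Next/Previous-indifference-point or Evaluate query is answerable. The invariant is that at every known point, at least $n-1$ players strictly prefer the same piece, which is infeasible precisely because $\min\{k_1,k_2\}\ge 2$ (your criterion: either $a+b\le 1<k_1$ or $a\ge n-1>k_1$). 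When the protocol outputs a cut $y$, the adversary voluntarily answers a query at $y$ if needed, forcing $y$ into a known interval, and then completes the profile by placing a single hidden point $z$ in the last unknown gap at which \emph{all} $n$ players switch simultaneously from right to left---this is the one (unqueried) feasible point, and the resulting profile is hungry with closed preference sets and even has finitely many indifference points per player. So the repair to your proposal is structural, not cosmetic: replace the two designated infinite oscillators and fixed background by fully adaptive, never-committed transitions for all $n$ players, concentrated at one deferred point.
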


\begin{proof}
Assume for the sake of contradiction that such a protocol exists.
We will show how an adversary can answer the protocol's queries in such a way that after any finite number of queries, for any assignment that the protocol may output, there exist players' preferences consistent with the answers for which the assignment is not envy-free.
This is sufficient to obtain the desired contradiction.

During the run of the protocol, the adversary maintains a set of ``known intervals''---the protocol knows the preferences of all players at every point in these intervals. All of these intervals are closed.
Initially, only the degenerate intervals $[0,0]$ and $[1,1]$ are known.
Whenever the protocol makes a query, the adversary modifies the known intervals in such a way that
(a) the query can be answered using the information encoded in the known intervals, (b) the known intervals do not cover the entire cake, and (c) if the cake is cut anywhere inside a known interval, then at least $n-1$ players want the same piece. We now explain exactly how the adversary modifies the known intervals.

Suppose that the protocol asks player $i$ a query at point $x$.
If $x$ belongs to a known interval, the adversary extends the interval on both sides (unless the interval contains $0$ or $1$, in which case the protocol extends it on one side) so that the gap to the next known interval is reduced by half.
For example, if at the beginning a query is made at point $0$, the adversary extends the known interval $[0,0]$ to $[0,0.5]$.
On the other hand, if $x$ is outside any known interval, the adversary adds a new known interval containing~$x$ such that the gap to the adjacent known interval on each side is reduced by half.
For example, if at the beginning a query is made at point $0.4$, the adversary adds a new known interval $[0.2,0.7]$.

Now, the adversary determines the players' preferences in the newly added (or extended) known intervals as follows. 
All players $i'\neq i$ prefer only the right piece at any point added to a known interval.
For player $i$, for each maximal new portion $[a,b]$ of a known interval, the adversary chooses $a < a' < b' < b$, and lets player $i$ prefer only the right piece everywhere in $[a,a')\cup(b',b]$, only the left piece everywhere in $(a',b')$, and both pieces at $a'$ and $b'$.
For example, if at the beginning the protocol asks player $i$ a query at point $0.4$, then all players except $i$ prefer only the right piece at any point in $[0.2,0.7]$, while player~$i$ prefers only the right piece at, e.g., any $y\in[0.2,0.4)\cup(0.5,0.7]$, only the left piece at any $y\in(0.4,0.5)$, and both pieces at $0.4$ and $0.5$.

The only exception to the rules in the previous paragraph is if the query is made at a point in the rightmost known interval---in this case, the roles of ``right'' and ``left'' are reversed.
So, if at the beginning a query is made at point $1$ for player~$i$, then all other players prefer only the \emph{left} piece at any $y\in [0.5,1]$, while player~$i$ prefers only the left piece at, e.g., any $y\in [0.5,0.7)\cup(0.8,1]$, only the right piece at any $y\in (0.7,0.8)$, and both pieces at $0.7$ and $0.8$.

Observe that, after the modifications, player $i$ has indifference points both to the left and to the right of $x$ (or on one side if $x=0$ or $x=1$), and the preference of $i$ at~$x$ is known, so the adversary has sufficient information to answer any query based on the newly specified preferences.

Since the protocol is finite, it must eventually return a partition of the cake with some cut point $y$.
If $y$ is not in a known interval, then the adversary voluntarily answers a query for some player $i$ at point $y$, using the same rules as above, so now $y$ is in a known interval.

The adversary's answers guarantee that, at any point in a known interval, including~$y$, at least $n-1$ players prefer only the left piece, or at least $n-1$ players prefer only the right piece.
Since $\min\{k_1,k_2\}\ge 2$, it follows that the assignment returned by the protocol cannot be envy-free.

Finally, 
to demonstrate that there are complete preferences that are consistent with its answers,
the adversary completes the preferences of all players in all unknown intervals by letting all players prefer only the right piece at every point in each interval.
The only exception is the unknown interval adjacent to the rightmost known interval---for this unknown interval, the adversary chooses a point $z$ in it and lets all players prefer only the right piece at points to the left of $z$, only the left piece at points to the right of $z$, and both pieces at $z$.
It is easy to verify that the resulting preferences satisfy both the hungry players assumption and the closed preference sets assumption.
\end{proof}

Theorems~\ref{thm:cake-pos} and \ref{thm:cake-neg}, combined with the observation that an analog of Theorem~\ref{thm:cake-pos} yields a finite protocol using the Previous-indifference-point query for the case $k_2 = 1$, provide a complete characterization of the group sizes for which a finite protocol exists in the setting with two groups.

What happens when we desire a division into three or more groups?
It turns out that even for three or more \emph{singleton} groups (that is, $m\geq 3$ and $k_j=1$ for all $j\in[m]$), no finite protocol exists.
This follows from an impossibility result by Stromquist~\cite{Stromquist08}
for the \emph{additive} model.
In this model, each player $i$ has a utility function $u_i$, which is a nonnegative nonatomic measure on the cake, such that $u_i([0,1]) > 0$.
In each cake partition, player $i$ prefers piece $[x,y]$ if and only if $u_i([x,y])$ is maximum among the pieces in the partition.
The assumption that the functions $u_i$ are nonnegative implies the hungry players assumption; the assumption that they are nonatomic implies the closed preference sets assumption. Hence, the additive model is a special case of the model that we study.

In the additive model, a protocol is allowed to make two types of queries (which together form the \emph{Robertson--Webb model}~\cite{RobertsonWe98}): an \emph{Evaluate} query, returning a player's utility for a given piece of cake, and a \emph{Cut} query, returning a piece of cake for which a player has a certain utility starting from a given point.
Like our proof, Stromquist's proof uses an adversary argument where the adversary answers the protocol's queries in such a way that the protocol cannot produce an envy-free assignment with certainty after any finite number of queries.
Under our general preferences without cardinal utilities, it is unclear which queries should be allowed when there are three or more groups; moreover, in light of the impossibility result in the more specific model, it appears unlikely that a finite algorithm would exist for any reasonable set of permitted queries.
We will therefore go no further in this direction.\footnote{
As we noted above, Stromquist's impossibility result holds even when all players have preferences based on additive utility functions.
The same is true for our Theorem \ref{thm:cake-neg}: the preferences used in the proof can be represented by additive utility functions (however, unlike Stromquist's result, our result requires non-monotone preferences). 
To see this, for each player $i$, 
plot a continuous function between $(0,0)$ and $(1,1)$, crossing $1/2$ at every indifference point of $i$, so that the function is below $1/2$ in any interval in which $i$ prefers the right piece, and 
above $1/2$ in any interval in which $i$ prefers the left piece. Since every player has a finite number of indifference points in our construction, such a function $U_i: [0,1]\to [0,1]$ represents an additive utility function, which assigns to each interval $[x,y]$ the utility $U_i(y)-U_i(x)$.
}

The following table compares our results in this section with previous results.
Positive results are valid even for non-additive preferences, while negative results are valid even for additive preferences.

\begin{center}
\begin{tabular}{|c|c|c|c|}
 \hline
\textbf{ Monotonicity }& $m$ & $k_i$ & \textbf{Protocol?} \\
 \hline
 \hline
Monotone & $\geq 3$ & Arbitrary & No \cite{Stromquist08} \\
 \hline
Monotone & $2$ & Arbitrary & Yes \cite{SegalhaleviSu21} \\
 \hline
 Arbitrary & $2$ & $k_1=1$ or $k_2=1$ & Yes [Theorem~\ref{thm:cake-pos}] \\
 \hline
Non-monotone & $2$ & $k_1, k_2\geq 2$& No [Theorem~\ref{thm:cake-neg}] \\
 \hline
\end{tabular}
\end{center}

\section{Chore Division.}
Suppose now that, instead of a holiday cottage, there is a sensitive public building that must be secured during the night. 
We therefore wish to partition the night into $m$ shifts. In each shift $j\in[m]$, the building should be secured by $k_j$ guards.

When $k_j=1$ for all $j\in[m]$, the problem is an instance of the well-known \emph{chore division} problem, wherein an undesirable task is to be divided among players who prefer to receive as little of the task as possible according to their own preferences. 
The difference between chore division and cake cutting is expressed by replacing the hungry players assumption with the lazy players assumption:

\begin{enumerate}
\item$\!\!\!\!{^*}$  \emph{Lazy players}. 
Players never prefer a nonempty piece if an empty piece is available.
\end{enumerate}

\begin{lemma}
\label{lem:strong-lazy}
A lazy player with closed preference sets prefers all (and only) the empty pieces whenever there is at least one empty piece.
\end{lemma}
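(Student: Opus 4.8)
The plan is to split the claim into its two halves. The ``only'' direction is immediate from the lazy players assumption: since the hypothesis guarantees that an empty piece is available, a lazy player never prefers a nonempty piece, and hence prefers only empty pieces. The real work lies in the ``all'' direction---that \emph{every} empty piece is preferred---and for this I would combine the closed preference sets assumption with an approximation argument.

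Fix a partition $(x_1,\dots,x_m)$ with at least one empty piece, and let $j'$ be an arbitrary index with $x_{j'}=0$; the goal is to show the player prefers piece $j'$. First I would construct a sequence of partitions converging to $(x_1,\dots,x_m)$ in which $j'$ is the \emph{unique} empty piece. Writing $E=\{j : x_j=0\}$ for the set of empty pieces and noting that $N=\{j : x_j>0\}$ is nonempty (because the lengths sum to $1$), for each small $\varepsilon>0$ I would give every other empty piece $j\in E\setminus\{j'\}$ length $\varepsilon$, keep piece $j'$ at length $0$, and compensate by subtracting a total of $(|E|-1)\varepsilon$ from a single nonempty piece $j_0\in N$. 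For $\varepsilon$ small enough this is a valid partition in which $j'$ is the only empty piece, and it converges to $(x_1,\dots,x_m)$ as $\varepsilon\to 0$.

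The key step is then to invoke laziness at each partition in the sequence: because $j'$ is the unique empty piece, a lazy player prefers only $j'$ there, and in particular prefers $j'$. Applying the closed preference sets assumption to this convergent sequence, the player must also prefer $j'$ at the limiting partition $(x_1,\dots,x_m)$. Since $j'$ was an arbitrary empty piece, the player prefers every empty piece, which together with the ``only'' direction establishes the lemma.

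I expect the main difficulty to be bookkeeping rather than conceptual: I must keep the piece indices fixed along the sequence, so that the limit is taken coordinatewise and the closed preference sets assumption applies to the same piece $j'$ throughout. I should also check the degenerate cases---when $E=\{j'\}$ the sequence may be taken constant, and since $N\neq\emptyset$ there is always a nonempty piece from which to borrow length, so no partition in the sequence ever loses its nonempty pieces. The footnoted feature of the model, that a player may prefer different pieces in physically identical partitions, causes no trouble here precisely because the construction never permutes coordinates.
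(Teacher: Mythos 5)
Your proof is correct and takes essentially the same approach as the paper: for each empty piece $j'$ you build a sequence of partitions converging to the given one in which $j'$ is the \emph{unique} empty piece, invoke laziness along the sequence, and finish with the closed preference sets assumption, which is exactly the paper's argument. Your explicit $\varepsilon$-construction, the separate ``only'' direction, and the degenerate case $E=\{j'\}$ simply spell out details the paper leaves implicit.
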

\begin{proof}
If exactly one piece is empty, then a lazy player by definition prefers only this piece. 
Consider now a partition $\cakeprt$ in which pieces $j_1,\dots,j_t$ are empty for some $t > 1$.
This partition $\cakeprt$  is the limit of $t$ sequences of partitions such that in each sequence $\ell\in [t]$, only piece $j_{\ell}$ is empty while the sizes of the pieces $j_{\ell'}$ for $\ell'\ne\ell$ converge to~$0$. 
A lazy player prefers piece $j_{\ell}$ in all partitions of sequence $\ell$;
therefore, a lazy player with closed preference sets prefers piece $j_\ell$ in the limit partition $\cakeprt$ too. This is true for every empty piece $j_\ell$.
\end{proof}

For individuals (i.e., singleton groups), the existence of an envy-free chore division was established in this \textsc{Monthly} by Su~\cite{Su99}. 
We show next that the general group result of Segal-Halevi and Suksompong~\cite{SegalhaleviSu21}, too, extends to chore division. 
The proof uses a reduction from group chore division to group cake cutting, which may be useful in other contexts beyond group division.\footnote{
Segal-Halevi and Suksompong~\cite{SegalhaleviSu21} reduced group cake cutting to individual cake cutting.
Why can we not use the same method for reducing group chore division to individual chore division?
The reason is that while their reduction preserves ``hungriness,'' it does not preserve ``laziness.''

More specifically, given a player in a group cake-cutting instance, the reduction constructs a player in an individual cake-cutting instance. To determine the preferences of the individual-instance player on an $n$-partition~$\cakeprt$, the reduction constructs an $m$-partition $\choreprt$ by uniting sequences of adjacent pieces: the union of pieces $1,\ldots k_1$ in $\cakeprt$ is the first piece in $\choreprt$; 
the union of pieces $k_1+1,\ldots k_2$ in $\cakeprt$ is the second piece in $\choreprt$; and so on.

Now, if some piece in $\cakeprt$ is \emph{nonempty}, then the corresponding piece in $\choreprt$ is nonempty too; therefore, hungriness of the group-instance player implies hungriness of the individual-instance player.
However, 
if some piece in $\cakeprt$ is \emph{empty}, then there is no reason to assume that the corresponding piece (or any other piece) in $\choreprt$ will be empty too; so laziness of the group-instance player does \emph{not} imply laziness of the individual-instance player.
}

\begin{theorem}
\label{thm:chores}
Let $n\geq m$ be any positive integers, and let $k_1,k_2,\dots,k_m$ be positive integers such that $\sum_{j=1}^m k_j = n$.
For any $n$ lazy players with closed preference sets, there is an $m$-partition of the chore and an envy-free assignment of the parts to $m$~groups, with group~$j$ containing $k_j$ players.
\end{theorem}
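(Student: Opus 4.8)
The plan is to reduce group chore division to group cake cutting and invoke the existence result of Segal-Halevi and Suksompong~\cite{SegalhaleviSu21}. Given the $n$ lazy players, I will build an auxiliary group cake-cutting instance on the same player set, with the same group sizes $k_1,\dots,k_m$, by transporting each player's chore preference through a fixed continuous map $\phi\colon\simplex\to\simplex$ on the space of $m$-partitions. Concretely, for each player $i$ I define a cake preference by $P_i^{\mathrm{cake}}(\cakeprt):=P_i^{\mathrm{chore}}(\phi(\cakeprt))$, keeping the piece indices (hence the left-to-right order and the group labels) unchanged. With this definition, ``$j\in P_i^{\mathrm{cake}}(\cakeprt)$'' is literally the same statement as ``player $i$ prefers the $j$th chore piece in the partition $\phi(\cakeprt)$,'' so an envy-free cake assignment at some partition $\cakeprt^\ast$ pulls back to an envy-free chore assignment at $\choreprt=\phi(\cakeprt^\ast)$ with the identical grouping. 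The whole proof therefore hinges on choosing $\phi$ so that the transported preferences satisfy the hypotheses of the cake theorem: nonempty values (immediate, since preferences are nonempty), closed preference sets, and, crucially, hungriness.

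Assume $m\ge2$ (the case $m=1$ is trivial). Closedness of $P_i^{\mathrm{cake}}$ will follow for free from closedness of $P_i^{\mathrm{chore}}$ as long as $\phi$ is continuous, since a convergent sequence $\cakeprt^{(\ell)}\to\cakeprt$ is carried to a convergent sequence $\phi(\cakeprt^{(\ell)})\to\phi(\cakeprt)$. The real work is hungriness: I must guarantee that whenever a cake piece $j$ is empty ($x_j=0$), player $i$ does not prefer it. Using Lemma~\ref{lem:strong-lazy}, the cleanest way to force this is to ask that $\phi$ satisfy two boundary conditions: (A) if $x_j=0$ then $\phi(\cakeprt)_j>0$, i.e.\ empty cake pieces are sent to nonempty chore pieces; and (B) whenever some cake piece is empty, some chore piece of $\phi(\cakeprt)$ is empty as well. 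Indeed, if $\cakeprt$ has an empty piece then (B) makes $\phi(\cakeprt)$ have an empty piece, so by Lemma~\ref{lem:strong-lazy} the set $P_i^{\mathrm{chore}}(\phi(\cakeprt))$ consists of exactly the empty chore pieces, and (A) says none of these sits at an index where the cake is empty; hence $P_i^{\mathrm{cake}}(\cakeprt)$ avoids every empty cake piece.

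I will realize (A) and (B) with the explicit ``truncated complement''
\[
\phi(\cakeprt)_j \;=\; \frac{\max\!\left\{0,\ \frac{1}{m-1}-x_j\right\}}{\sum_{k=1}^m \max\!\left\{0,\ \frac{1}{m-1}-x_k\right\}}.
\]
The denominator never vanishes: it is zero only when every $x_k\ge\frac{1}{m-1}$, which would force $\sum_k x_k\ge\frac{m}{m-1}>1$, impossible; so $\phi$ is well-defined and continuous, and clearly maps $\simplex$ into $\simplex$. For (A), $x_j=0$ gives numerator $\frac{1}{m-1}>0$, so $\phi(\cakeprt)_j>0$. For (B), if some piece is empty then the remaining at most $m-1$ pieces carry the full mass $1$, so one of them is at least $\frac{1}{m-1}$; that coordinate has numerator $0$, making the corresponding chore piece empty. (As a sanity check, for $m=2$ the map is the coordinate swap $\phi(x_1,x_2)=(x_2,x_1)$, matching the intuition that the ``smaller chore piece'' should correspond to the ``larger cake piece.'')

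With $\phi$ in hand the proof concludes quickly: the transported preferences $P_i^{\mathrm{cake}}$ are nonempty-valued, closed (continuity of $\phi$), and hungry (conditions (A), (B), and Lemma~\ref{lem:strong-lazy}), so the group cake-cutting theorem of~\cite{SegalhaleviSu21} yields a partition $\cakeprt^\ast$ and a grouping, group~$j$ of size $k_j$, with every player in group~$j$ preferring the $j$th cake piece. Translating through $\phi$, the chore partition $\choreprt=\phi(\cakeprt^\ast)$ together with the same grouping is the desired envy-free chore division. The main obstacle is precisely the construction of $\phi$: a naive reflection of the interval fails because it preserves piece lengths, so empty pieces map to empty pieces and a lazy player's love of empty pieces would make the cake player covet empty pieces; and the naive averaging complement $\frac{1-x_j}{m-1}$ fails condition (B), since a boundary partition with no full piece is sent into the interior, where laziness constrains nothing. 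The truncation at $\frac{1}{m-1}$ is exactly what reconciles (A) and (B) while keeping the denominator positive.
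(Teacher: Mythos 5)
Your proof is correct, and at its core it is the same reduction as the paper's: on the subsimplex $\smallsimplex=\{\cakeprt\in\simplex \mid x_j\le \tfrac{1}{m-1} \text{ for all } j\}$ your map $\phi$ coincides exactly with the paper's exemption map $\choreprt=\mathbf{1}_m-(m-1)\cdot\cakeprt$ (there your denominator equals $\tfrac{1}{m-1}$, so the truncation is inactive), and both proofs conclude by invoking Theorem~2 of \cite{SegalhaleviSu21} on the transported hungry players. Where you genuinely diverge is in the treatment of partitions outside $\smallsimplex$. The paper defines the hungry preferences by cases---$h_i(\cakeprt)=\ell_i(\choreprt)$ on $\smallsimplex$ and $h_i(\cakeprt)=\{j \mid x_j\ge\tfrac{1}{m-1}\}$ otherwise---which forces it to verify closedness by hand (computing the affine image of $P_{\ell_i,j}$ and using Lemma~\ref{lem:strong-lazy} to show the boundary of the ``otherwise'' region is absorbed into the closed part) and, at the end, to argue by pigeonhole that the envy-free cake partition $\mathbf{y}$ lies in $\smallsimplex$ before translating back. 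Your truncated-and-renormalized complement instead extends the map continuously to all of $\simplex$, so closedness is an immediate pullback (the set $\{\cakeprt \mid j\in P_i^{\mathrm{cake}}(\cakeprt)\}$ is $\phi^{-1}$ of the corresponding closed chore set), no case split or terminal membership check is needed, and the translation back is valid at every output partition; the pigeonhole the paper uses at the end (at most $m-1$ pieces can have length $\ge\tfrac{1}{m-1}$) reappears in your proof as condition (B), and Lemma~\ref{lem:strong-lazy} enters your hungriness argument rather than the paper's closedness argument. The trade-off is mild---your hungriness check needs both (A) and (B), whereas the paper's is a direct one-line computation inside $\smallsimplex$---and on balance your version is a clean streamlining of the same idea that eliminates the most technical portions of the paper's proof.
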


Before proving the theorem, we introduce some formal notation. 
We represent each $m$-partition of the cake or chore 
by a vector $\cakeprt \in \simplex$, where $\simplex$ is the standard simplex in $\mathbb{R}^m$ (i.e., for all $j\in[m]$, $x_j\geq 0$ and  $\sum_{j=1}^m x_j=1$). 
The value of $x_j$ indicates the length of piece $j$ (where the piece indices start from the left).

The preferences of a player are represented by a \emph{demand function} $h:\simplex\rightarrow 2^{[m]}$ that assigns to each partition a nonempty subset of preferred pieces.
Given $j\in[m]$ and a demand function $h$, we define $P_{h,j} := \{\cakeprt\in\Delta^{m-1} ~|~ j\in h(\cakeprt)\}$ to be the set of partitions in which piece~$j$ is preferred according to $h$.
If $h$ satisfies the closed preference sets assumption, then $P_{h,j}$ is a closed subset of $\simplex$ for every $j$.

The proof idea is to convert any chore-allocation instance to a cake-allocation instance. 
Instead of giving each point of the chore $[0,1]$ to a single group, 
we give \emph{exemptions} from the chore to $m-1$ groups. 
We show that, if the players' preferences in the chore instance satisfy the lazy players and closed preference sets assumptions, 
then their preferences in the exemptions instance satisfy the hungry players and closed preference sets assumptions, and therefore the existence result for cake cutting is applicable. The details are given below.

\begin{proof}[Proof of Theorem \ref{thm:chores}]
For each $m$-partition $\cakeprt\in\Delta^{m-1}$, define a vector $\choreprt\in\mathbb{R}^m$ by
\begin{align*}
\choreprt := \mathbf{1}_m - (m-1)\cdot \cakeprt,
\end{align*}
where $\mathbf{1}_m$ is a vector of $m$ ones, i.e., $\widehat{x}_j := 1 - (m-1)\cdot x_j$ for all $j\in[m]$.
Note that $\sum_{j=1}^m \widehat{x}_j=m-(m-1)=1$, and $\choreprt\in \simplex$ if and only if $x_j\leq \frac{1}{m-1}$ for all $j\in[m]$.
Let $\smallsimplex:=\{\cakeprt\in\simplex ~|~ 
\forall j\in[m]: x_{j}\leq \frac{1}{m-1}
\}$.
Intuitively, for any point $\cakeprt\in\smallsimplex$,
the vector $(m-1)\cdot \cakeprt$ represents an allocation of exemptions, and $\choreprt$ represents the corresponding allocation of chores.

We are given $n$ lazy players, where each lazy player $i\in[n]$ has a demand function $\ell_i: \Delta^{m-1}\to 2^{[m]}$.
From these players, we construct $n$ \emph{hungry players}, with demand functions $h_i: \Delta^{m-1}\to 2^{[m]}$ defined as follows:
\begin{align*}
h_i(\cakeprt) := 
\begin{cases}
\ell_i(\choreprt) & \text{if } \cakeprt\in \smallsimplex;
\\
\left\{j \,\middle|\, x_j \geq  \frac{1}{m-1}\right\} & \text{otherwise}.
\end{cases}
\end{align*}
Note that $h_i(\cakeprt)$ is nonempty in both cases.
In the first case, $(m-1)\cdot \cakeprt$ corresponds to a valid allocation of exemptions; in the second case, $(m-1)\cdot \cakeprt$ corresponds to an ``over-allocation'' of exemptions, that is, an allocation in which some exemptions (those with $x_j>\frac{1}{m-1}$) span more than the entire cake. 

In order to apply the cake-cutting theorem from \cite{SegalhaleviSu21}, we now show that $h_i$ satisfies the hungry players and closed preference sets assumptions.

\underline{Hungry players}.
Consider an $m$-partition $\cakeprt$ where $x_j=0$ for some $j\in[m]$. 
We have to show that $j\not\in h_i(\cakeprt)$.
If $\cakeprt\not\in\smallsimplex$,
then by definition $h_i(\cakeprt)$ contains only pieces of length at least $\frac{1}{m-1}$, so $j\not\in h_i(\cakeprt)$.
Suppose now that $\cakeprt\in\smallsimplex$.
Since $\sum_{j=1}^m x_j = 1$, the definition of $\smallsimplex$ implies that $x_{j'} = \frac{1}{m-1}$ for all $j'\ne j$.
Hence, $\widehat{x}_j = 1$, and $\widehat{x}_{j'} = 0$ for all $j'\neq j$.
Since $\ell_i$ represents a lazy player, 
who never prefers a positive piece if an empty piece is available,
$j\not\in \ell_i(\choreprt)= h_i(\cakeprt)$.

\underline{Closed preference sets}. Fix some $i\in[n]$ and $j\in[m]$. The definition of $h_i$ implies that $P_{h_i,j} = Q_{h_i, j} \cup R_{h_i, j}$, where
\begin{align*}
Q_{h_i, j} &:= \{\cakeprt\in\smallsimplex ~|~ 
j\in \ell_i(\choreprt)\};
\\
R_{h_i, j} &:= \left\{\cakeprt\not\in\delta^{m-1} \,\middle|\,
x_j\geq \tfrac{1}{m-1}\right\}.
\end{align*}

The set $Q_{h_i, j}$ is closed since it is a linear transformation of $P_{\ell_i,j}$, which is closed by the closed preference sets assumption on the original lazy-players instance. 
Specifically, recall that 
$
P_{\ell_i,j} = \{\cakeprt\in\simplex ~|~ 
j\in \ell_i(\cakeprt)\}$.
If we take the reflection of $P_{\ell_i,j}$ over the origin, translate it by $\mathbf{1}_m$, and dilate it by $\frac{1}{m-1}$, we get another closed set:
\begin{align*}
&
\left\{ \tfrac{1}{m-1}\cdot(\mathbf{1}_m - \cakeprt) \,\middle|\,
\cakeprt \in P_{\ell_i,j}
\right\}
\\
&=
\{ \mathbf{y} \in\mathbb{R}^m ~~|~~
\mathbf{1}_m  - (m-1)\cdot \mathbf{y} \in P_{\ell_i,j}
\}
\\
&=
\{ \mathbf{y} \in\mathbb{R}^m ~~|~~
\widehat{\mathbf{y}} \in P_{\ell_i,j}
\}
\\
&=
\{ \mathbf{y} \in\mathbb{R}^m ~~|~~
\widehat{\mathbf{y}} \in \simplex
~,~
j\in \ell_i(\widehat{\mathbf{y}})
\}
\\
&=
\{ \mathbf{y} \in\mathbb{R}^m ~~|~~
\mathbf{y}\in \smallsimplex
~,~
j\in \ell_i(\widehat{\mathbf{y}})
\}
&&\text{(since $\widehat{\mathbf{y}} \in \simplex$  iff $\mathbf{y}\in\smallsimplex$)}
\\
&= Q_{h_i, j}.
\end{align*}
Hence, $Q_{h_i, j}$ is closed.

The set $R_{h_i, j}$ by itself is not closed, but its boundary is contained in $Q_{h_i,j}$. Specifically, by Lemma~\ref{lem:strong-lazy}, 
for any $\cakeprt\in\smallsimplex$ (i.e., $\choreprt\in\simplex$),
$\ell_i(\choreprt)$ contains all $j$ for which $\widehat{x}_j=0$, so it contains all $j$ for which $x_j=\frac{1}{m-1}$. 
Therefore,
\begin{align*}
&
Q_{h_i, j} \supseteq 
\left\{
\cakeprt\in\smallsimplex
\,\middle|\,
x_j = \tfrac{1}{m-1}
\right\}
=
\left\{
\cakeprt\in\smallsimplex
\,\middle|\,
x_j \geq \tfrac{1}{m-1}
\right\}
\\
\implies&
Q_{h_i, j}\cup R_{h_i, j} = 
Q_{h_i, j} \cup 
\left\{\cakeprt\in\simplex \,\middle|\, 
x_j \geq \tfrac{1}{m-1}
\right\},
\end{align*}
so $P_{h_i,j}$ is the union of two closed sets and is consequently closed itself.

We are now ready to complete our proof.
By Theorem 2 in  \cite{SegalhaleviSu21},
there exists a partition $\mathbf{y}\in\simplex$, along with a division of the hungry players (represented by the demand functions $h_i$) into groups with group~$j$ containing $k_j$ players, such that the partition is envy-free for the hungry players.
If $\mathbf{y}\not\in\smallsimplex$, then all hungry players prefer only pieces with length at least $\frac{1}{m-1}$, and there are at most $m-1$ such pieces, so at least one group is assigned a non-preferred piece, a contradiction with envy-freeness.
Hence, $\mathbf{y}\in\smallsimplex$. 
By construction, the partition 
$\mathbf{\widehat{y}}$ along with the same division of the lazy players into groups is envy-free for the lazy players.
\end{proof}

\section{Mixed Cake.}
\label{sec:mixed}

Cake cutting assumes that all players are \emph{hungry} and never prefer an empty piece, whereas chore division assumes that all of them are \emph{lazy} and never prefer a nonempty piece over an empty one.
What if neither of these assumptions are universally true---perhaps some employees would rather stay home than spending certain periods at the cottage, and some guards enjoy making sure that the public building is secured?
The division of a so-called ``mixed cake'' among individual players has been studied by several authors in the last few years \cite{avvakumov2021envy,BogomolnaiaMoSa17,meunier2019envy,Segalhalevi18}.

The mixed cake model makes neither the hungry players nor the lazy players assumptions, while keeping the closed preference sets assumption.
Without additional assumptions, 
an envy-free allocation might not exist even for two singleton groups---for example, when both players always prefer only the right piece regardless of the cut point. While this preference satisfies the closed preference sets assumption, it implies that the players prefer the whole cake to an empty cake (when the cut point is $0$) and also prefer an empty cake to the whole cake (when the cut point is $1$)! To avoid this unrealistic situation, we add the requirement that that if two partitions $(x_1,\dots,x_m)$ and $(x_1',\dots,x_m')$ give rise to the same split of the cake, then each player's preference should be consistent across the partitions.
For instance, if in the partition $(0,0.2,0.8)$ a player prefers only the piece of length $0.8$, then the player should prefer only the piece of length $0.8$ in the partitions $(0.2, 0, 0.8)$ and  $(0.2, 0.8, 0)$ too.
However, even with this requirement, the existence of an envy-free assignment still cannot be guaranteed.
Indeed, consider five players whose preferences are represented by additive cardinal utility functions shown in Figure~\ref{fig:mixed-cake}.
The utilities are uniformly distributed within each labeled piece.
Given any partition of the mixed cake into two pieces, each player prefers the piece that yields a higher utility to the player (or both pieces in case of a tie).
It is clear that the closed preference sets assumption is satisfied for all players, and an unrealistic situation as above no longer occurs.
On the other hand, one can check that in any $2$-partition, each piece is preferred by no more than three players, meaning that an envy-free assignment for two groups of sizes four and one does not exist.

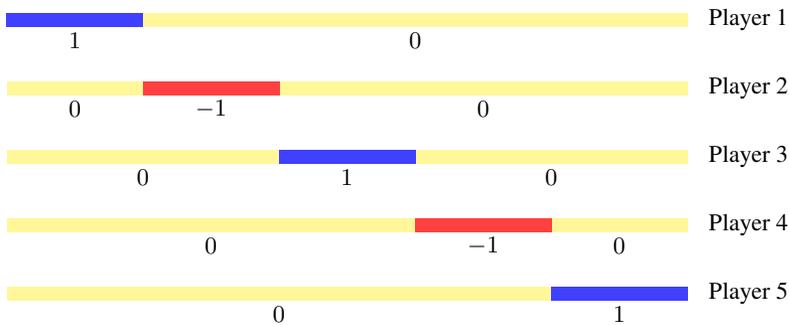
\begin{figure}[!ht]
\centering
\begin{tikzpicture}[scale=0.9]
\node at (11.9,1.1) {\small Player 5};
\path [fill=yellow!50] (1,1) rectangle (11,1.2);
\path [fill=blue!75] (9,1) rectangle (11,1.2);
\node at (10,0.8) {\small $1$};
\node at (5,0.8) {\small $0$};
\node at (11.9,2.1) {\small Player 4};
\path [fill=yellow!50] (1,2) rectangle (11,2.2);
\path [fill=red!75] (7,2) rectangle (9,2.2);
\node at (8,1.8) {\small $-1$};
\node at (4,1.8) {\small $0$};
\node at (10,1.8) {\small $0$};
\node at (11.9,3.1) {\small Player 3};
\path [fill=yellow!50] (1,3) rectangle (11,3.2);
\path [fill=blue!75] (5,3) rectangle (7,3.2);
\node at (6,2.8) {\small $1$};
\node at (3,2.8) {\small $0$};
\node at (9,2.8) {\small $0$};
\node at (11.9,4.1) {\small Player 2};
\path [fill=yellow!50] (1,4) rectangle (11,4.2);
\path [fill=red!75] (3,4) rectangle (5,4.2);
\node at (4,3.8) {\small $-1$};
\node at (2,3.8) {\small $0$};
\node at (8,3.8) {\small $0$};
\node at (11.9,5.1) {\small Player 1};
\path [fill=yellow!50] (1,5) rectangle (11,5.2);
\path [fill=blue!75] (1,5) rectangle (3,5.2);
\node at (2,4.8) {\small $1$};
\node at (7,4.8) {\small $0$};
\end{tikzpicture}
\caption{An example showing that Theorem~\ref{thm:chores} does not extend to a mixed cake.}
\label{fig:mixed-cake}
\end{figure}

This counterexample implies that in order to recover the guaranteed existence of an envy-free assignment, it may be necessary to allocate more than one contiguous piece of the mixed cake to each group.
The preferences in our model thus far are defined only for partitions into $m$ contiguous pieces, where $m$ is the number of groups.
We therefore extend them to partitions into $m$ \emph{collections} of pieces, where each collection consists of a finite number of contiguous pieces.
As before, for any such partition, each player prefers one or more collections, and the corresponding preference sets are closed.
For several division problems, it is desirable to allocate a small number of pieces to each group.
Unfortunately, the example in Figure~\ref{fig:mixed-cake} can be generalized to show that a number of cuts that is linear in the number of players may be inevitable even in the case of two groups.

\begin{theorem}
\label{thm:mixed-cake-negative}
Let $n\ge 4$ be any integer, $m=2$, and $k_1,k_2$ be positive integers with $k_1+k_2 = n$ and $\min\{k_1,k_2\} = 1$.
There exist $n$ players with preferences represented by additive cardinal utility functions such that any envy-free assignment of the mixed cake with group $j$ containing $k_j$ players requires at least $n-3$ cuts.
\end{theorem}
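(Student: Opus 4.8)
The plan is to generalize the five-player example of Figure~\ref{fig:mixed-cake} to $n$ players and to show that, no matter which player is chosen to form the singleton group, envy-freeness forces the two collections to interleave at least $n-3$ times. Concretely, I would place $n$ pairwise-disjoint tiny intervals $R_1 < R_2 < \dots < R_n$ along $[0,1]$ and give player $i$ an additive utility that is zero outside $R_i$, with constant density $+1$ on $R_i$ when $i$ is odd (a ``lover'') and constant density $-1$ on $R_i$ when $i$ is even (a ``hater''). Exactly as in Figure~\ref{fig:mixed-cake}, a lover weakly prefers whichever collection contains the larger part of $R_i$, and a hater weakly prefers whichever collection contains the smaller part; the closed preference sets assumption and the consistency-across-equivalent-partitions requirement are immediate from additivity, so verifying them is routine.

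Next I would fix an arbitrary envy-free assignment: some player $s$ forms the singleton group and receives one collection $C'$, while the remaining $n-1$ players form the other group and receive the complementary collection $C$. Writing $g\colon[0,1]\to\{C,C'\}$ for the resulting piecewise-constant collection-membership function, the number of cuts is at least the number of points at which $g$ switches value. The key step is to read off, for each player $i$, a single sample point $q_i\in R_i$ at which $g$ takes a prescribed value $\chi_i$. Indeed, a lover in the large group weakly prefers $C$, which forces $|C\cap R_i|\ge |R_i|/2>0$, so $g=C$ on a positive-measure subset of $R_i$ and we may take $q_i$ there with $\chi_i=C$; symmetrically a hater in the large group forces $g=C'$ on half of $R_i$, giving $\chi_i=C'$. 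For the singleton $s$ the roles are reversed, since it receives $C'$ rather than $C$: a lover singleton yields $\chi_s=C'$ and a hater singleton yields $\chi_s=C$. Encoding $C$ as $1$ and $C'$ as $0$, the string $\chi_1\chi_2\cdots\chi_n$ is therefore the alternating string $1010\cdots$ with its $s$-th entry flipped.

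The final step is a counting argument. Since $q_1<q_2<\cdots<q_n$ and $g(q_i)=\chi_i$, between any two consecutive indices with $\chi_i\ne\chi_{i+1}$ the function $g$ must switch value inside the open interval $(q_i,q_{i+1})$, and these intervals are pairwise disjoint; hence the number of cuts is at least the number of sign changes of $\chi$. The alternating string has $n-1$ sign changes; flipping an interior entry makes it agree with both (equal) neighbors, merging three length-one runs into one and removing exactly two sign changes (leaving $n-3$), while flipping a boundary entry removes one (leaving $n-2$). Thus every envy-free assignment, for every choice of the singleton $s$, uses at least $n-3$ cuts, which is what we want; note that $n\ge 4$ is precisely what makes $n-3\ge 1$ nontrivial.

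I expect the main obstacle to be ruling out the possibility that the assignment saves cuts by placing cuts \emph{inside} some region $R_i$, creating indifference (a tie) and thereby apparently freeing that player's constraint. The sample-point observation is what defeats this: the weak preference of player $i$ guarantees that the desired collection occupies at least half—hence a positive-measure part—of $R_i$, so a witness point $q_i$ with $g(q_i)=\chi_i$ exists regardless of how many cuts fall inside $R_i$. This reduces the entire geometric problem to counting sign changes of a fixed combinatorial string and makes the tie-handling disappear. A secondary point to check is that the argument is insensitive to which group is labeled ``first,'' since with non-contiguous collections there is no left-to-right ordering constraint to exploit.
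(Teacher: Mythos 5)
Your proposal is correct and takes essentially the same route as the paper: the identical alternating lover/hater construction, and the same key observation that envy-freeness forces the big group's collection $C$ to cover at least half of each of its members' blocks while the singleton flips exactly one entry of the resulting alternating pattern. Your witness-point/sign-change counting is just a slightly different bookkeeping of the paper's count of at least $n-3$ ``conforming'' adjacent pairs, each of which forces a distinct oriented cut.
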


\begin{proof}
Assume without loss of generality that $k_1 = n-1$ and $k_2 = 1$.
Consider $n$~players whose preferences are represented by additive cardinal utility functions.
For odd $i\in[n]$, player~$i$ has utility $1$ for the interval $[\frac{i-1}{n}, \frac{i}{n}]$, and utility $0$ for other intervals; the utilities are uniformly distributed within each interval.
Similarly, for even $i\in[n]$, player~$i$ has utility $-1$ for the interval $[\frac{i-1}{n}, \frac{i}{n}]$, and utility $0$ for other intervals.
The case $n=5$ is illustrated in Figure~\ref{fig:mixed-cake}.

Consider an envy-free allocation, and 
let $C$ be the piece (union of intervals) preferred by $n-1$ players.
Call a pair of players with adjacent indices $i$ and $i+1$ \emph{conforming} if both of them prefer $C$.
Note that, of the $n-1$ adjacent pairs of players, at least $n-3$ pairs are conforming.
For each conforming pair $\{i,i+1\}$, if $i$ has utility~$1$ for the interval $[\frac{i-1}{n}, \frac{i}{n}]$, there must be a cut in the interval $(\frac{i-1}{n},\frac{i+1}{n})$ such that the piece to the left of the cut belongs to $C$, while the piece to the right does not.
Analogously, if player~$i$ has utility $-1$ for the interval $[\frac{i-1}{n}, \frac{i}{n}]$, there must be a cut in the interval $(\frac{i-1}{n},\frac{i+1}{n})$ such that the piece to the left of the cut does not belong to $C$, while the piece to the right does.
Since no two cuts corresponding to different conforming pairs can coincide, the number of cuts is at least $n-3$.
\end{proof}

On the positive side, by using a classical result on ``consensus halving,'' one can see that the $n-3$ cuts required by Theorem~\ref{thm:mixed-cake-negative} is almost as bad as it gets, even for an arbitrary distribution of the players into the two groups.
Formally, for any positive integers $k_1,k_2$ such that $k_1+k_2 = n$, for $n$ players with preferences represented by continuous cardinal utility functions (not necessarily additive or monotone), there exists an envy-free assignment of the mixed cake with group $j$ containing $k_j$~players that uses at most $n$ cuts.
Indeed, a result of Simmons and Su~\cite{SimmonsSu03} states that for $n$~players with such utility functions, there exists a partition of the mixed cake into two parts using at most $n$ cuts such that every player has equal utility for both parts.
This partition allows us to divide the players into two groups arbitrarily so that group~$j$ contains $k_j$ players.\footnote{In fact, $n-1$ cuts suffice, since we can compute a consensus halving for $n-1$ players and let the $n$th player pick a preferred piece for his or her group.}

As our final remark, we note that the vast majority of the cake-cutting literature assumes that players' preferences are captured by additive utility functions---indeed, additivity is portrayed as an inherent part of the model in surveys on the subject~\cite{Procaccia13,Procaccia16}. 
However, in many applications of cake cutting the preferences are neither additive nor monotone, and as our work demonstrates, allowing general preferences raises several new questions with intriguing answers.
Investigating other questions in the more general preference model is therefore an appealing direction for future work.

\begin{acknowledgment}{Acknowledgment.}
This work was partially supported by the Israel Science
Foundation under grant number 712/20, by the Singapore Ministry of Education under grant number MOE-T2EP20221-0001, and by an NUS
Start-up Grant.
The authors wish to thank the editor and the anonymous reviewers for several constructive comments.
\end{acknowledgment}

\begin{affil}
Department of Computer Science, Ariel University, Israel\\
erelsgl@gmail.com
\end{affil}

\begin{affil}
School of Computing, National University of Singapore, Singapore\\
warut@comp.nus.edu.sg
\end{affil}

\end{document}